\newcommand{\Prob}{\mathbb{P}}
\newcommand{\APP}{\mathsf{APP}}
\newcommand{\TPP}{\mathsf{TPP}}
\newcommand{\spa}{\mathsf{space}}
\newcommand{\SWAPP}{\mathsf{SWAPP}}
\begin{document}

\title{Streaming algorithms for products of probabilities}

\author{Markus Lohrey \and Leon Rische \and Louisa Seelbach \and Julio Xochitemol}

\institute{Universit\"at Siegen, Department ETI, 57076 Siegen, Germany}

%\keywords{streaming algorithms}

%%%%%%%%%%%%%%%%%%%%%%%%%%%%%%%%%%%%%%%%%%%%%%%%%%%%%%

\maketitle

\begin{abstract}
We consider streaming algorithms for approximating a product of input probabilities up to multiplicative error of $1-\epsilon$. 
It is shown that every randomized streaming algorithm for this problem needs space $\Omega(\log n + \log b - \log \epsilon - \Theta(1))$,
where $n$ is length of the input stream and $b$ is the bit length of the input numbers. This matches an upper bound from Alur et al.~up to 
a constant multiplicative factor. Moreover, we consider the threshold problem, where it is asked whether the product of the
input probabilities is below a given threshold. It is shown that every  randomized streaming algorithm for this problem needs space 
$\Omega(n \cdot b)$. Finally, also the sliding window variant of the  approximation problem for a product of input probabilities is considered.
\end{abstract}

 \section{Introduction}
 
 The streaming model is one of the central concepts in data science to deal with huge data sets, where random access is no longer feasible.
 It is based on the processing of an input stream of data values. Every data value has to be processed immediately when it arrives. For a good
 introduction to streaming algorithms see \cite{BHK2020,Muthukrishnan05}. In this paper we consider the problem of computing probabilities in the streaming model.
 More precisely, consider a stream of probabilities $q_1, q_2, \ldots, q_n$ where every $q_i$ is given as a fraction $r_i/s_i$ of integers $r_i, s_i$
 with $r_i \geq 0$ and $1 \leq r_i \leq s_i$. The goal is to compute a good approximation of the product probability $\prod_{1 \leq i \leq n} q_i$.
 This problem has been first studied by Alur et al.~in an automata theoretic setting \cite{AlurCJK20}. It was shown that
 a $(1-\epsilon)$-approximation\footnote{A $(1-\epsilon)$-approximation of a number $x$ is any number $x'$ with
 $(1-\epsilon)x <x' < x/(1-\epsilon)$.}
 of the product $\prod_{1 \leq i \leq n} q_i$ can be computed by a deterministic streaming algorithm in space  $\log_2(n^2 b / \epsilon) = 2 \log n + \log b - \log \epsilon$. Here, $n$ is the length of the stream and $b$ is the maximal bit length of the numerators $r_i$ and denominators $s_i$ in the stream of probabilities $q_i = r_i/s_i$.\footnote{\label{fn1}The space bound stated in \cite{AlurCJK20} is actually larger than $2 \log n + \log b - \log \epsilon$ 
since the authors in   \cite{AlurCJK20} also count the space for internal computations; see Section~\ref{sec streaming} for more details on this.}

 Our first result is a lower bound that matches the upper bound $2 \log n + \log b - \log \epsilon$ up to a constant factor.
 This is also true for randomized streaming algorithms that produce a $(1-\epsilon)$-approximation of the product
$\prod_{1 \leq i \leq n} q_i$ with high probability. Our lower bound only holds if $\log (1/\epsilon) \le b  <  \log(1/(1-\epsilon)) \cdot n/2$.
This is a reasonable setting. If we consider the approximation parameter $\epsilon$ as a small constant then 
the requirement becomes $\Omega(1) \leq b \leq \mathcal{O}(n)$. Since the length $n$ of the stream is usually very large,
it is reasonable to assume $b \leq \mathcal{O}(n)$. In a realistic scenario one would even require 
$b \leq \mathcal{O}(\log n)$.

Our second result is a new lower bound for the threshold problem: the input consists again of a stream of probabilities 
 $q_1, q_2, \ldots, q_n$, where $q_i = r_i/s_i$ and the integers $r_i$ and $s_i$ need at most $b$ bits.
 In addition, there is a threshold probability $t = r/s$, where $r$ and $s$ also need at most $b$ bits. The question is whether
 $\prod_{1 \leq i \leq n} q_i < t$ holds.
 It was shown in \cite[Theorem~A.1]{AlurCJK20} that every deterministic streaming algorithm for the general threshold problem, where 
 the $q_i$ and $t$ are not restricted to the interval $[0,1]$, has to store $\Omega(n)$ many bits when  $b = \log n$.
 It was asked whether this lower bound also holds when all $q_i$ and $t$ are probabilities. We prove that this is indeed the case
 and moreover improve the lower bound from $\Omega(n)$ to $\Omega(n \cdot b)$ if $b \geq \Omega(\log n)$ for randomized streaming 
 algorithms. Note that the threshold problem can be always solved in space $\Omega(n \cdot b)$ by naively computing the product   $\prod_{1 \leq i \leq n} q_i$. On the other hand, if $n$ is very large compared to $b$, then there is a better (deterministic) algorithm that 
 only needs $\mathcal{O}(2^b)$ bits; see Theorem~\ref{thm-threshold-upper}.
  
  In the final section we consider the product approximation problem for rational probabilities in the sliding window setting. Here, the goal
  is to maintain a $(1-\epsilon)$-approximation of the product of the last $m$ probabilities from the input stream. The parameter $m$ is called the window size. 
  We provide a space lower bound of $\Omega(m \cdot (\log b - \log \epsilon - \Theta(1)))$ for randomized sliding window algorithms, where as before $b$ is the bit length of the input probabilities.
 Currently, we are not able to match this lower bound by an upper bound. In Section~\ref{sec-SW} we provide two incomparable upper bounds:
 $2 m b$ (which is trivial) and $m \cdot (\log m + \log b - \log \epsilon)$. 
 
 \subsection{Related work}

 The first paper that studied the sliding window model formally was \cite{DatarM02}. It is shown there that for a stream of 
 positive integers of bit length $b$,
 a $(1-\epsilon)$-approximation of the sum of the numbers in a sliding window of length $m$ can be maintained in space
 $\mathcal{O}( (1/\epsilon) (\log m + b) \log m)$. A matching lower bound is shown as well. A survey on sliding window algorithm can be found in \cite{Aggarwal07}.

  \section{Preliminaries}

For a possibly infinite set $A$ we denote with $A^*$ the set of all finite words (or sequences) over the set $A$. We 
write $A^{\le n}$ for the subset of words of length at most $n$.
With $\log$ we always denote the logarithm to the base two.
For a positive rational number $q = r/s$ with $r,s \in \mathbb{N}$ and $\mathsf{gcd}(r,s)=1$ we define the {\em bit size} of $q$ as
$\Vert q \Vert = \max\{\lceil \log r \rceil, \lceil \log s \rceil\}$. In addition, we define $\Vert 0 \Vert = 0$.
Let $\Prob = \{ q \in \mathbb{Q} : 0 \le q \le 1 \}$ be the set of rational probabilities and 
$\Prob_b = \{ q \in \Prob : \Vert q \Vert \leq b \}$ be the set of rational probabilities of bit size at most $b$.

\subsection{Streaming algorithms} \label{sec streaming}

We only consider streaming problems where the input values are 
rational probabilities and there is a set of allowed  rational output probabilties
(this also covers the case, where the output is either $0$ or $1$). In our context, the set of allowed output probabilities consists of all approximations
of some rational probability that is uniquely determined by the input stream.

Formally, a streaming problem is specified by a  function $F : \Prob^* \to 2^{\Prob}$
mapping a sequence $q_1 q_2 \cdots q_n$ of rational probabilities to a set $F(q_1 q_2 \cdots q_n)$ of rational probabilities.
Below, we slice the set of all input sequences $\Prob^*$ into the sets $\Prob_b^{\le n}$ consisting of input
sequences of length at most $n$, where in addition all input numbers have bit size at most $b$.

A deterministic streaming algorithm can be formalized by a family of deterministic finite automata $\mathcal{D} = (\mathcal{A}_{n,b})_{n,b \in \mathbb{N}}$, where the automaton $\mathcal{A}_{n,b}$ is responsible for inputs from the slice $\Prob_b^{\le n}$. Hence,
the input alphabet of $\mathcal{A}_{n,b}$ is $\Prob_b$. Let $Q_{n,b}$ be the set of states of 
$\mathcal{A}_{n,b}$. Its elements can be seen as the memory states of the streaming algorithm. There is a distinguished
initial memory state $s_{0,n,b} \in Q_{n,b}$.
The dynamics of the streaming algorithm is described by the state transition 
function $\delta_{n,b} : Q_{n,b} \times \Prob_b \to Q_{n,b}$
such that for a memory state $s \in Q_{n,b}$ and an incoming data value $q \in \Prob_b$, $\delta_{n,b}(s,q) \in Q_{n,b}$ is the new memory state.
In addition, $\mathcal{A}_{n,b}$ has an output function $o_{n,b} : Q_{n,b} \to \Prob$ instead of a set of final states.
The automaton $\mathcal{A}_{n,b}$ computes a function $f_{\mathcal{A}_{n,b}}: \Prob_b^{\le n} \to \Prob$ as follows:
Starting from the initial state, the automaton reads the input stream $w \in \Prob_b^*$. After reading
the last input number from $w$, the automaton is in a certain state $s \in Q_{n,b}$. The output value 
$f_{\mathcal{A}_{n,b}}(w)$ is then $o_{n,b}(s)$.
A deterministic streaming algorithm $\mathcal{D}= (\mathcal{A}_{n,b})_{n,b \in \mathbb{N}}$ for the streaming problem $F$ has to satisfy
$f_{\mathcal{A}_{n,b}}(w) \in F(w)$ for all $n,b \in \mathbb{N}$ and $w  \in  \Prob_b^{\le n}$.
Finally, the space complexity of the streaming algorithm $\mathcal{D}$ is the mapping $\spa_{\mathcal{D}}(n,b) = \lceil \log_2|Q_{n,b}| \rceil$,
which is the number of bits required to encode the 
states in binary notation. It is therefore the number of bits that $\mathcal{A}_{n,b}$ has to store.

Some comments regarding our definition are appropriate. First, our definition of a deterministic streaming algorithm 
is non-uniform in the sense that for every input slice $\Prob_b^{\le n}$ we have a separate algorithm (the automaton 
$\mathcal{A}_{n,b}$). In principle the function $(n, b) \mapsto \mathcal{A}_{n,b}$ may be not computable. In a real streaming
algorithm, this function will be certainly computable. On the other hand, the main focus in this paper is on lower bounds, and lower
bounds for non-uniform algorithms also hold for uniform algorithms. The few streaming algorithms that we present in this paper
are moreover clearly uniform.

Related to the previous point is the fact that we ignore in our definition of the space complexity 
$\spa_{\mathcal{D}}(n,b) = \lceil \log_2|Q_{n,b}| \rceil$ the space needed for computing the 
mapping $\delta_{n,b}$, i.e., the space
needed to compute the next memory state from the previous memory state and the input data value. This space is what we called
the space for internal computations in footnote~\ref{fn1}. In general, this internal space will also grow with $n$ and $b$. 
In a non-uniform streaming algorithm, the function $(n,b,q,s) \mapsto \delta_{n,b}(q,s)$ might be even uncomputable.
The justification for ignoring the space for internal computation is the same as for non-uniformity. 
A lower bound for our measure of space complexity also holds
for the more realistic measure that incorporates the internal space needed for computing the state transformation mapping.
Another point is that lower bound
techniques for streaming algorithms (which are typically based on information theoretic arguments or communication complexity)
can only yield lower bounds on our measure of space complexity: the classical application of communication lower bounds to streaming
lower bounds constructs from a streaming algorithm (for a certain streaming problem) a communication protocol (for a certain communication
problem). Thereby the messages exchanged between Alice to Bob are memory states of the streaming algorithm. Hence, a lower bound
for the communication problem yields a lower bound for the bit length of the memory states of the streaming algorithm. This line of argument
does not take the memory for internal computations in the streaming algorithm into account.

Randomized streaming algorithms can be defined analogously to deterministic streaming algorithms by taking
for every $\mathcal{A}_{n,b}$ a probabilistic finite automaton (see \cite{Paz71} for details on probabilistic automata). For an input stream $w \in \Prob_b^*$, we then
obtain a random variable $\mathsf{R}_{\mathcal{A}_{n,b}}^w$, whose support is the set of all output values $o_{n,b}(s)$ for $s \in Q_{n,b}$. 
For an output value $q \in \mathbb{P}$,
$\mathsf{R}_{\mathcal{A}_{n,b}}^w(q)$ is the probability that on input $w$ the probabilistic automaton 
$\mathcal{A}_{n,b}$ returns $q$. A randomized streaming algorithm $\mathcal{R} = (\mathcal{A}_{n,b})_{n,b \in \mathbb{N}}$ 
for the streaming problem $F$ has to satisfy
\begin{equation} \label{eq-rsa}
\mathsf{Prob}[\mathsf{R}_{\mathcal{A}_{n,b}}^w \in F(w)] \geq \frac{2}{3}
\end{equation}
 for all $n,b \in \mathbb{N}$ and $w  \in  \Prob_b^{\le n}$.  

We consider the following streaming problems in this paper, where $0 \leq \epsilon < 1$ is an approximation ratio
and $q_1, q_2, \ldots, q_n \in \Prob$:
\begin{itemize}
\item
The \emph{approximation of products of probabilities}, $\APP_{\epsilon}$ for short, is 
\[  
\APP_{\epsilon}(q_1 q_2\cdots q_n) = \bigg\{ q \in \mathbb{P} :   (1-\epsilon) \prod_{i=1}^n q_i < q < \frac{1}{1-\epsilon} \prod_{i=1}^n q_i \bigg\}.
\]
\item The \emph{threshold problem for products of probabilities}, $\TPP$ for short, is
\[  
\TPP(q_1q_2\cdots q_n) = \begin{cases}
1 & \text{ if } \prod_{i=2}^{n} q_i  < q_1 \\
0 & \text{ otherwise. }
\end{cases}
\]
\end{itemize}
It is easy to see that for these streaming problems the 
error probability $\leq \frac{1}{3}$ from \eqref{eq-rsa} can
be reduced to any $\lambda < 1/3$ by running $\mathcal{O}(\log(1/\lambda))$ many independent copies of 
$\mathcal{A}_{n,b}$. In the case of $\APP_{\epsilon}$, the output value will be the median of the output values
of the independent copies of $\mathcal{A}_{n,b}$, whereas for $\TPP$ (where the output is boolean) the output bit
is obtained by a majority vote.

\subsection{Communication complexity}

Lower bounds for randomized streaming algorithms are typically derived from lower bounds on randomized 
communication complexity \cite{KushilevitzN97,Roughgarden16}.  We only use the one-way setting, where Alice sends a single message to Bob.

Consider a function $f \colon X \times Y \to \{0,1\}$ for some finite sets $X$ and $Y$, which we call a 
communication problem.
A {\em randomized one-way (communication) protocol} $P$ for $f$ consists of two parties called Alice and Bob.
The input for Alice (resp., Bob) is an element $x \in X$ and a random choice $r \in R$
(resp., $y  \in Y$ and a random choice $s \in S$). Here, $R$ and $S$ are finite sets and the random choices of Alice and Bob are independent.
The goal of Alice and Bob is to compute $f(x,y)$ with high probability.
For this, Alice computes from here input $x \in X$ and random choice $r \in R$ a message $a(x,r) \in \{0,1\}^*$ and sends it
to Bob. Bob then computes from $a(x,r)$, his input $y \in Y$ and random choice $s \in S$ 
the final output $b(a(x,r),y,s) \in \{0,1\}$ of the protocol.
We also  assume a probability distribution on the set $R$ (resp., $S$)
of Alice's (resp., Bob's) random choices.
The protocol $P$ {\em computes} $f$ if 
for all $(x,y) \in X \times Y$ we have
\begin{equation} \label{eq-rcc}
\mathsf{Prob}_{r \in R, s \in S}[b(a(x,r),y,s)  = f(x,y)] \ge \frac{2}{3}.
\end{equation}
The cost of the protocol is the maximum length of $a(x,r)$ taken over all 
$(x,r) \in X \times R$.
The {\em randomized one-way communication complexity} of $f$ is the minimal cost
among all (one-way) randomized protocols that compute $f$. Here, the size of the finite sets $R$ and $S$ is not restricted.
The choice of the constant $2/3$ in \eqref{eq-rcc} is arbitrary in the sense that changing the constant
to any $1-\lambda$ only changes the communication complexity
by a constant (depending on $\lambda$),
see \cite[p.~30]{KushilevitzN97}. Also note that we only use the private version of randomized communication protocols, where Alice
and Bob make private random choices from the sets $R$ and $S$, respectively, and their choices are not known to the other
party (in contrast to the public version of randomized communication protocols).

We will use the following communication problems in this paper:
The {\em greater-than problem} $\mathsf{GT}_m$ is defined as follows:
\begin{itemize}
\item Alice's input is a number $a  \in \{1,\dots,m\}$.
\item Bob's input is a number $a' \in \{1,\dots,m\}$.
\item Bob's goal is to determine whether $a > a'$.
\end{itemize}
The {\em index-greater-than problem} $\mathsf{IGT}_{m,n}$ is defined as follows:
\begin{itemize}
\item Alice's input is a sequence $a_1 a_2 \cdots a_n$ with $a_i \in \{1,\dots,m\}$ for all $i$.
\item Bob's input is a number $a' \in \{1,\dots,m\}$ and a number $i \in \{1, \ldots, n\}$.
\item Bob's goal is to determine whether $a_i > a'$.
\end{itemize}

\begin{theorem} \label{thm:coco}
The following hold:
\begin{itemize}
\item $\mathsf{GT}_m$ has randomized one-way communication complexity $\Theta(\log m)$ {\rm \cite{MiltersenNSW98}}.
\item $\mathsf{IGT}_{m,n}$ has randomized one-way communication complexity $\Theta(n  \cdot \log m)$ {\rm \cite{BravermanGLWZ18}}.
\end{itemize}
\end{theorem}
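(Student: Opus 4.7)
The plan is to attack each bullet separately, with a trivial deterministic upper bound matched by a distributional lower bound. For the upper bounds, Alice simply sends her input(s) in binary: $\lceil \log m \rceil$ bits for $\mathsf{GT}_m$ and $n \lceil \log m \rceil$ bits for $\mathsf{IndexGT}_{m,n}$. Bob then computes the answer deterministically, so no randomness is needed in the upper bound direction.

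For the $\Omega(\log m)$ lower bound on $\mathsf{GT}_m$, I would use the round-elimination argument of Miltersen, Nisan, Safra and Wigderson \cite{MiltersenNSW98}. By Yao's minimax principle, it suffices to bound the distributional one-way deterministic complexity under a hard distribution; the uniform distribution on $\{1,\dots,m\}^2$ works. A deterministic one-way protocol of cost $c$ partitions Alice's input space into at most $2^c$ message classes, and Bob's answer on each class is a fixed function of $b$. A pigeonhole argument shows that if $2^c$ is much smaller than $m$, then some class contains a long interval of $a$-values, and Bob cannot correctly answer $a>b$ with constant advantage when $b$ is uniform inside that interval. Pushing this quantitative bound back through Yao's principle yields $c = \Omega(\log m)$.

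For the $\Omega(n\log m)$ lower bound on $\mathsf{IndexGT}_{m,n}$, the plan is an information-complexity-based direct-sum argument following Braverman, Garg, Li, Weinstein and Zuckerman \cite{BravermanGLWZ18}. Take the product distribution where the $a_i$ are independent and uniform on $\{1,\dots,m\}$ and Bob's index $i$ is uniform on $\{1,\dots,n\}$. An Alice-to-Bob message $M$ meeting the correctness requirement must satisfy $I(a_i ; M \mid i) = \Omega(\log m)$ for each coordinate $i$, because restricting attention to coordinate $i$ and simulating the other coordinates with private randomness yields a single-instance $\mathsf{GT}_m$ protocol. Summing over the $n$ coordinates and using $|M| \geq H(M) \geq \sum_{i=1}^n I(a_i ; M)$ on a product distribution then gives $|M| = \Omega(n \log m)$.

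The main obstacle is the single-instance bound itself. Unlike $\mathsf{EQ}_m$, which admits an $O(\log\log m)$ one-way randomized protocol via hashing, $\mathsf{GT}_m$ is genuinely logarithmic, and proving this cleanly under private-coin randomness requires either the round-elimination induction (with careful bookkeeping of how the universe shrinks per round and the handling of the constant-size base case) or an explicit embedding from a known hard problem such as $\mathsf{AugmentedIndex}_{\log m}$. Once this is secured, the direct-sum lift to $\mathsf{IndexGT}_{m,n}$ is essentially routine information-theoretic machinery.
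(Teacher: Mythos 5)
First, note that the paper does not prove Theorem~\ref{thm:coco} at all; it simply cites \cite{MiltersenNSW98} and \cite{BravermanGLWZ18} (the latter is Braverman, Grigorescu, Lang, Woodruff and Zhou, not the author list you name). So you are attempting a genuine proof where the paper offers none, which is legitimate, but your sketch has a concrete gap exactly at the point you yourself flag as ``the main obstacle''. Your claimed hard distribution for $\mathsf{GT}_m$ --- the uniform distribution on $\{1,\dots,m\}^2$ --- is not hard: a deterministic one-way protocol in which Alice sends the top two bits of $a$ (i.e., the index of the length-$m/4$ block containing $a$) lets Bob answer correctly except when $b$ falls in that block, an event of probability $1/4 < 1/3$, so the distributional one-way complexity under uniform inputs with error $1/3$ is $O(1)$. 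Consequently the pigeonhole step fails: even if some message class contains a long interval of $a$-values, Bob only risks erring when $b$ lands inside that interval, which under the uniform marginal happens too rarely to force constant distributional error. The actual $\Omega(\log m)$ bound needs either the round-elimination induction with its self-similar hard distribution (where $b$ is essentially a random prefix-perturbation of $a$) or an embedding of $\mathsf{AugmentedIndex}_{\log m}$; ``uniform plus pigeonhole'' cannot work.

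The same problem propagates into your direct-sum step for $\mathsf{IndexGT}_{m,n}$. The claim $I(a_i;M\mid i)=\Omega(\log m)$ under uniform $a_i$ does not follow from the communication lower bound for $\mathsf{GT}_m$: simulating the other coordinates privately turns a low-information message into a low-communication protocol only if you already have an \emph{information-cost} lower bound for $\mathsf{GT}_m$ under the chosen per-coordinate distribution, and under the uniform distribution that bound is false (the two-bit protocol above has information cost $O(1)$ and constant error). So the chain $|M|\ge H(M)\ge \sum_i I(a_i;M)=\Omega(n\log m)$ breaks at its middle link. To repair the argument you would need a per-coordinate distribution under which $\mathsf{GT}_m$ has information cost $\Omega(\log m)$ at the relevant error level (or an error-amplification/augmented-index style embedding into $\mathsf{IndexGT}_{m,n}$), which is precisely the technical content of the cited works; alternatively, simply cite them as the paper does.
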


\section{Lower bound for the product approximation problem} \label{sec lower bound approx}

In this section we consider the problem $\APP_\epsilon$.
The following upper bound was shown in \cite{AlurCJK20}.

\begin{theorem}[\mbox{\cite{AlurCJK20}}] \label{thm-alur}
For every $0 < \epsilon < 1/2$ there is a deterministic streaming algorithm for $\APP_{\epsilon}$ with
space complexity  $\log(n^2 b / \epsilon) = 2 \log n + \log b - \log \epsilon$.
\end{theorem}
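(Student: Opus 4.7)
The plan is to reduce the multiplicative approximation of $P = \prod_{i=1}^n q_i$ to an additive approximation of its logarithm $L = \log P = \sum_{i=1}^n \log q_i$. Indeed, $2^S$ is a $(1-\epsilon)$-approximation of $P$ iff $|S - L| < \log\bigl(1/(1-\epsilon)\bigr)$, and since for $0 < \epsilon < 1/2$ one has $\log\bigl(1/(1-\epsilon)\bigr) \geq \epsilon$, it suffices to maintain an approximation of $L$ with additive error at most $\epsilon$. The entire design reduces to a running-sum problem with bounded accumulated precision loss.

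The streaming algorithm I would build stores a single fixed-point register $S$ with $I$ integer bits and $k$ fractional bits, i.e., values on the grid $2^{-k}\mathbb{Z}$ inside some range $[-M,0]$. Since $|\log q_i| \le \Vert q_i \Vert \le b$, the true sum $L$ lies in $[-nb,0]$ and can be captured with $I = \lceil \log(nb) \rceil + O(1)$ integer bits. On reading $q_i$, the algorithm internally (hence not charged against $\spa_{\mathcal{D}}$, by the convention of Section~\ref{sec streaming}) computes $\log q_i = \log r_i - \log s_i$ to precision well below $2^{-k}$, adds it to $S$, and rounds the result to the grid $2^{-k}\mathbb{Z}$; after the last input, it outputs $2^S$. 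Inputs with $q_i = 0$ are handled by a single absorbing ``sink'' state whose output is $0$.

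For the error analysis, each update introduces at most one rounding error of magnitude $\le 2^{-k}$, so after $n$ updates the accumulated error is at most $n \cdot 2^{-k}$. Taking $k = \lceil \log(n/\epsilon) \rceil$ forces this below $\epsilon \le \log\bigl(1/(1-\epsilon)\bigr)$, so $2^S$ is a $(1-\epsilon)$-approximation of $P$. The register then uses $I + k + O(1) = \log(nb) + \log(n/\epsilon) + O(1) = 2\log n + \log b - \log \epsilon + O(1)$ bits, matching the claimed bound.

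The main obstacle, such as it is, is purely bookkeeping: pinning down exactly how the approximation of $\log q_i$ interacts with the fixed-point rounding so that the per-step error is truly bounded by $2^{-k}$ (rather than a larger constant multiple, which would only shift additive constants in the final bound). The remaining details, including the absorbing-state treatment of zero inputs and the ``internal computation'' convention for evaluating logarithms, are routine.
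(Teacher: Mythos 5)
Your proposal is correct and takes essentially the same route as the paper: the bucket index $a_i$ of Alur et al., defined by $q_i \in ((1-\epsilon')^{a_i+1},(1-\epsilon')^{a_i}]$ with $\epsilon'=\epsilon/n$, is exactly a quantization of $\log q_i$ with granularity $-\log(1-\epsilon')\approx \epsilon/n$, and their algorithm maintains the running sum of these quantized logarithms (bounded by $n^2b/\epsilon$), just as your fixed-point register does on the grid $2^{-k}$ with $k\approx\log(n/\epsilon)$. The differences are cosmetic (base of quantization, rounding the running sum instead of each item, and the final output should be a rational such as $(1-\epsilon')^a$ rather than the generally irrational $2^S$), and both yield the bound $2\log n+\log b-\log\epsilon+O(1)$.
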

Let us briefly explain the idea since it will be needed in Section~\ref{sec-SW}. Assume that the input stream is $q_1 q_2 \cdots q_n$.
Define $\epsilon' = \epsilon/n$. For an $a \in \mathbb{N}$ we define the interval $B_{a, \epsilon'} = ((1- \epsilon')^{a+1}, (1-\epsilon')^a]$, called a  \emph{bucket} in \cite{AlurCJK20}.
The streaming algorithm computes for every input number $q_i$ the unique $a_i \in \mathbb{N}$ such that $q_i \in B_{a_i, \epsilon'}$
and computes the sum $a = \sum_{i=1}^n a_i$. At the end it returns $(1-\epsilon')^a$. It can be shown that this is a $(1-\epsilon)$-approximation of the product of the $q_i$.
Moreover, every bucket index $a_i$ can be bounded by $n b/\epsilon$
(in \cite{AlurCJK20} the bound $4 n b/\epsilon$ is stated, but it 
easy to see that $n b/\epsilon$ suffices.)
 Hence, every partial sum of the $a_i$ is bounded
by $n^2 b/\epsilon$ for which $2 \log n + \log b - \log \epsilon$ bits suffice.

The space bound 
stated in \cite{AlurCJK20} also contains the space needed for computing the memory state transition function (called 
$\delta_{n,b}$ in Section~\ref{sec streaming}). This is mainly the space for computing the bucket indices $a_i$.
Recall from Section~\ref{sec streaming} that we ignore this internal space in our model.
We match the upper bound $2 \log n + \log b - \log \epsilon$ by a lower bound up to a factor of two for a reasonable range
of the parameters $n$, $b$, and $\epsilon$.

\begin{theorem} \label{thm-lb-approx-prod-det}
Let $0 < \epsilon < 1/2$. Every deterministic streaming algorithm for $\APP_{\epsilon}$ has space complexity 
at least $\log n + \log b - \log \epsilon - \Theta(1)$ when $n$, $b$, and $\epsilon$ satisfy
$-\log \epsilon \le b  <  -\log(1-\epsilon) \cdot n/2$.
\end{theorem}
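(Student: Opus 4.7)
The plan is to prove the lower bound by a direct state-counting argument. I would first observe that if two input streams $w, w' \in \Prob_b^{\le n}$ drive the automaton $\mathcal{A}_{n,b}$ to the same state, then the algorithm returns the same output on both, and this common output must belong to $\APP_\epsilon(w) \cap \APP_\epsilon(w')$. Writing $P(w) = \prod_i q_i$ for the product of $w$ and setting $\delta = -\log(1-\epsilon)$, this intersection is empty precisely when $|\log P(w) - \log P(w')| \ge 2\delta$. Hence it suffices to construct a family of $N = \Omega(nb/\epsilon)$ streams in $\Prob_b^{\le n}$ whose log-products are pairwise at least $2\delta$ apart: this forces the number of states to be at least $N$, yielding space at least $\log_2 N = \log n + \log b - \log \epsilon - \Theta(1)$ (using $\delta = \Theta(\epsilon)$ for $\epsilon < 1/2$).

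To construct such a family, I would use two probabilities operating at different scales. Let $p_1 = 1/2^b \in \Prob_b$, so $\log(1/p_1) = b$, and let $p_2 = \lfloor 2^b (1-\epsilon)^2 \rfloor / 2^b$; the latter has bit size at most $b$ and satisfies $\ell_2 := \log(1/p_2) = 2\delta + O(2^{-b})$. The hypothesis $-\log \epsilon \le b$ gives $\delta \ge \Omega(2^{-b})$, so this additive $O(2^{-b})$ slack is a bounded fraction of $\delta$. I would then pick a sufficiently large constant $c$ so that, setting $K = \lfloor b / (c\delta) \rfloor$, one has $\ell_2 K \le b - 2\delta$; the second hypothesis $b < \delta n / 2$ ensures $K \le n / 2$. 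The proposed family is
\[
w_{k_1, k_2} \;=\; p_1^{k_1}\, p_2^{k_2}, \qquad k_1 \in \{0, 1, \ldots, \lfloor n/2 \rfloor\},\; k_2 \in \{0, 1, \ldots, K\},
\]
each stream having length at most $n$. The family has cardinality $\Omega(n K) = \Omega(nb / \delta) = \Omega(nb / \epsilon)$.

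Finally, pairwise separation reduces to a short calculation: the log of $1/P(w_{k_1, k_2})$ is $b k_1 + \ell_2 k_2$, so for distinct index pairs the log-difference has magnitude $\ell_2 |k_2 - k_2'| \ge 2\delta$ when $k_1 = k_1'$, and at least $b - \ell_2 K \ge 2\delta$ when $k_1 \ne k_1'$, by the choice of $K$. The main obstacle I anticipate is calibrating the constant $c$ together with the construction of $p_2$ so that $\ell_2 K$ stays below $b - 2\delta$ while still giving $K = \Omega(b / \delta)$; this is exactly where both hypotheses on $b$ are used, the first to control the rounding error in $\ell_2$ relative to $\delta$, and the second to guarantee that $K$ copies of $p_2$ fit into a stream of length $n$.
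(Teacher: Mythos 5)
Your proposal is correct, and while it follows the same overall lower-bound strategy as the paper---exhibit $\Omega(nb/\epsilon)$ inputs in $\Prob_b^{\le n}$ whose products are pairwise more than $2\delta$ apart in log scale (with $\delta=-\log(1-\epsilon)=\Theta(\epsilon)$), so that a deterministic algorithm must end in pairwise distinct states---the construction of that separated family is genuinely different. The paper works over the three-element set $\{(k-1)/k,\,(k-2)/(k-1),\,2^{-b}\}$ with $k$ chosen so that $1/k\le\epsilon\le 1/(k-1)$, proves by an inductive case distinction that every bucket $B_j=\big((1-\epsilon)^{j+1},(1-\epsilon)^j\big]$ with $j\le\lfloor nb/\delta\rfloor$ is hit by a product of length at most $2n$, then keeps every third bucket to get separation and rescales $2n\to n$. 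You instead build the family directly as a two-scale grid $p_1^{k_1}p_2^{k_2}$ with $p_1=2^{-b}$ (coarse step $b$) and $p_2=\lfloor 2^b(1-\epsilon)^2\rfloor/2^b$ (fine step $\ell_2$ with $2\delta\le\ell_2\le 2\delta+O(2^{-b})$), and the pairwise-separation check is a two-line computation. Your route avoids the induction and the bucket-hitting lemma, uses only two distinct input values (the paper uses three, as it later remarks), and works with streams of length at most $n$ directly, so no rescaling is needed; the price is the rounding of $p_2$ and the calibration of the constant $c$ in $K=\lfloor b/(c\delta)\rfloor$, which you rightly flag as the delicate point. It does go through: the hypothesis $-\log\epsilon\le b$ gives $2^{-b}\le\epsilon\le\delta$, hence $\ell_2\le(2+O(1))\delta$, so an absolute constant $c$ yields $\ell_2K\le b-2\delta$ once $b$ exceeds a small constant (for the finitely many smaller $b$ the same hypothesis forces $\log b-\log\epsilon=O(1)$, and the subfamily with $k_2=0$ already gives $\Omega(n)$ separated products since $b\ge 2>2\delta$), while $b<\delta n/2$ gives $K\le n/(2c)$, so every stream fits into length $n$. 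The paper's construction carries slightly more structural information (every bucket is reachable, which is also what its randomized extension recycles with spacing five), but for the deterministic bound your grid is a simpler and equally valid fooling set achieving the same $\log n+\log b-\log\epsilon-\Theta(1)$ count.
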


\begin{proof}
Fix an $\epsilon$ with $0 < \epsilon < 1/2$ 
and let $\delta = -\log(1-\epsilon)$. Note that $0 < \delta < 1$.
Let $\mathcal{D}$ be a deterministic streaming algorithm for $\APP_{\epsilon}$.
It suffices to show that $\spa_{\mathcal{D}}(2n,b) \geq \log n + \log b - \log \delta - \Theta(1)$ 
under the assumption 
\begin{equation}
-\log \epsilon \le b  <  \delta \cdot n . \label{cond-n-eps} 
\end{equation}
We can then subsitute $n$ by $n/2$. Moreover,
it can be easily shown that the difference $\log\epsilon - \log\delta = \log\epsilon - \log\log(1/(1-\epsilon))$ 
belongs to the interval $(-1,\log\ln 2) \subseteq (-1,-0.52)$ when $0 < \epsilon < 1/2$. 
Thus, we have $\log \epsilon - \Theta(1) = \log \delta - \Theta(1)$.

By \eqref{cond-n-eps} (which yields $1/2^b \le \epsilon < 1/2$) there is a $k\in \mathbb{N}$ with $3 \le k \le 2^b$ and
\begin{equation} \label{cond-k}
\frac{1}{k} \leq \epsilon \leq  \frac{1}{k-1} 
\end{equation}
and hence
\begin{equation} \label{cond-k'}
\frac{k-2}{k-1} \leq 1-\epsilon \leq  \frac{k-1}{k} .
\end{equation}
\medskip
 \noindent
 We consider streams of length at most $2n$ of rational numbers from the set 
 \begin{equation} \label{set S}
 S = \bigg\{\frac{k-1}{k}, \frac{k-2}{k-1}, \frac{1}{2^{b}} \bigg\} \subseteq \mathbb{P}.
 \end{equation}
 Note that $\Vert q \Vert \leq b$ for all $q \in S$.
Let us define 
 \begin{equation} \label{def-y}
y := \left\lfloor \frac{n b}{\delta}\right\rfloor .
\end{equation}
For $j \ge 0$ we define the $j$-th bucket as the interval
\[
B_j = \big( (1-\epsilon)^{j+1}, (1-\epsilon)^j \big] \subseteq (0,1]. 
\]
Note that every real number $r \in (0,1]$ belongs to a unique bucket.

 \begin{claim} \label{claim-APP}
For every $j$ with $0 \leq j \leq y$ there are numbers $q_1, q_2, \ldots, q_\ell \in S$ with $\ell \le 2n$ and
$\prod_{i=1}^{\ell} q_i \in B_j$.
\end{claim}

\begin{proof}
Let $0 \leq j \leq y$. 
Choose the number $m \geq 0$ maximal such that $2^{-b m} \geq (1-\epsilon)^{j}$, which is equivalent to
$2^{bm} \leq 2^{\delta j}$, i.e., $bm \leq \delta j$.
We therefore have 
\[
m \ = \ \left\lfloor \frac{j \delta}{b} \right\rfloor \ \leq \ \frac{j \delta}{b} \leq \frac{y \delta}{b} \ \stackrel{\text{\eqref{def-y}}}{\le} \ n .
 \]
 We also have 
  \begin{eqnarray*}
(1-\epsilon)^{j} & \ \leq \ & 2^{-bm}  \ \leq \   2^{-b \left(\frac{j \cdot \delta}{b}-1\right)} 
\  = \  2^b \cdot 2^{-j \cdot \delta} 
\  =  \  2^b (1-\epsilon)^j 
\  \stackrel{\text{\eqref{cond-n-eps}}}{<} \  2^{\delta n} (1-\epsilon)^j  \\
 & \ = \ & (1-\epsilon)^{-n} (1-\epsilon)^j 
\  = \  (1-\epsilon)^{j - n}  .
 \end{eqnarray*}
It follows that if $B_s$ is the unique bucket containing $2^{-bm}$ then
$\max\{0, j-n\} \leq s \leq j$. Hence, we have  $0 \leq j-s \leq n$.
We now define the first $m \le n$ numbers of the stream as 
$q_1 = q_2 = \cdots = q_m = 2^{-b}$.
Hence we have
\begin{equation} \label{first-n-q_i}
q_1 q_2 \cdots q_m = 2^{-bm} \in B_s,
\end{equation}
where $s$ satisfies $0 \leq j-s \leq n$.

In the second step we choose  the numbers $q_{m+1}, \ldots, q_{m+j-s}$ inductively
such that $q_1 q_2 \cdots q_{m+j-s} \in B_j$. Note that $m+j-s \leq 2n$.
Assume that $q_{m+1}, \ldots, q_{m+d} \in S$ have been chosen for some $0 \leq d < j-s$ such that
$q_1 q_2 \cdots  \cdots q_{m+d} \in B_{s+d}$. Note that for $d=0$ this holds by \eqref{first-n-q_i}.
We choose $q_{m+d+1}$ by a case distinction depending on $q_1 q_2 \cdots  \cdots q_{m+d}$. It might happen that
$q_1 q_2 \cdots  \cdots q_{m+d}$ satisfies both cases 1 and 2 below. Then we can choose $q_{m+d+1}$ according to one
of them. Recall also that $k$ is such that $3 \leq k \leq 2^b$ and \eqref{cond-k'} holds. 

\medskip
\noindent
{\em Case 1.} We have 
$$
q_1 q_2 \cdots q_{m+d} \in \left( (1-\epsilon)^{s+d+1}, \frac{k}{k-1} (1-\epsilon)^{s+d+1} \right] \subseteq B_{s+d}.
$$
Inclusion in $B_{s+d}$ holds since $k (1-\epsilon)/(k-1) \leq 1$ by \eqref{cond-k'}.
Then we get 
\begin{eqnarray*}
q_1 q_2 \cdots q_{m+d} \cdot \frac{k-1}{k} & \in & \left( \frac{k-1}{k} (1-\epsilon)^{s+d+1}, (1-\epsilon)^{s+d+1} \right]  \\
& \subseteq & \left( (1-\epsilon)^{s+d+2}, (1-\epsilon)^{s+d+1} \right] = B_{s+d+1} ,
\end{eqnarray*}
where the inclusion in the second line follows from $(k-1)/k \geq  (1-\epsilon)$; see \eqref{cond-k'}. We can therefore choose $q_{m+d+1} = (k-1)/k \in S$.

\medskip
\noindent
{\em Case 2.} We have 
$$
q_1 q_2 \cdots q_{m+d} \in \left(\frac{k-1}{k-2}  (1-\epsilon)^{s+d+2}, (1-\epsilon)^{s+d} \right] \subseteq B_{s+d},
$$
where the inclusion in $B_{s+d}$ follows from $(k-1) (1-\epsilon)/(k-2) \geq 1$; see again \eqref{cond-k'}. Also 
note that $k \geq 3$ so that $k-2$ is not zero.  Then we get 
\begin{eqnarray*}
q_1 q_2 \cdots q_{m+d} \cdot \frac{k-2}{k-1} & \in & \left( (1-\epsilon)^{s+d+2}, \frac{k-2}{k-1} (1-\epsilon)^{s+d} \right] \\
& \subseteq & \left( (1-\epsilon)^{s+d+2}, (1-\epsilon)^{s+d+1} \right] = B_{s+d+1},
\end{eqnarray*}
where the inclusion in the second line follows again from \eqref{cond-k'}. We can therefore choose $q_{m+d+1} = (k-2)/(k-1) \in S$.

\medskip
\noindent
It remains to show that the cases 1 and 2 cover all possibilities, i.e., that 
\[ 
\frac{k}{k-1} (1-\epsilon)^{s+d+1} \geq \frac{k-1}{k-2}  (1-\epsilon)^{s+d+2}.
\]
After cancelling $(1-\epsilon)^{s+d+1}$ we obtain 
\begin{equation*} \label{eq-claim1}
\frac{k}{k-1}  \geq \frac{k-1}{k-2}  (1-\epsilon) = \frac{k-1}{k-2} - \epsilon \cdot \frac{k-1}{k-2} \, ,
\end{equation*}
or, equivalently,
\[
\epsilon \geq \frac{k-2}{k-1} \cdot \bigg( \frac{k-1}{k-2} - \frac{k}{k-1} \bigg) =  \frac{k-2}{k-1} \cdot  \frac{1}{(k-1) (k-2)} = \frac{1}{(k-1)^2}\ .
\]
But this is true, because by \eqref{cond-k} and $k \geq 3$ we have
$\epsilon \ge k^{-1} \ge (k-1)^{-2}$.
This concludes the proof of the claim.
\qed\end{proof}
Consider the buckets $B_{3j}$ for $0 \le j \le \lfloor y/3 \rfloor$. By our claim, for every 
$0 \le j \le \lfloor y/3 \rfloor$ there is a number $a_j \in B_{3j}$ such that $a_j$ can be written
as a product of length at most $2n$ over the set $S$. Let $s_j$ be the corresponding input 
stream whose product is $a_j$. 

Consider now $i$ and $j$ with $0 \leq i < j \le \lfloor y/3 \rfloor$.
We have $a_j < a_i$. Moreover, 
\[
\frac{a_j}{1-\epsilon} \leq \frac{(1-\epsilon)^{3j}}{1-\epsilon} = (1-\epsilon)^{3j-1} \le (1-\epsilon)^{3i+2} = (1-\epsilon)(1-\epsilon)^{3i+1} < (1-\epsilon)a_i .
\]
This means that the deterministic streaming algorithm $\mathcal{D}$ for $\APP_\epsilon$
must yield different output numbers for the input streams
$s_i$ and $s_j$. In particular, the streaming algorithm must arrive in different memory states  for the input streams
$s_i$ and $s_j$. Since this holds for all  $0 \leq i < j \le \lfloor y/3 \rfloor$, the algorithm must have at least $\lfloor y/3 \rfloor + 1 \geq y/3$ 
memory states and therefore must store at least $\log(y/3) = \log n + \log b - \log\delta - \Theta(1)$ bits.
\qed\end{proof}
We now extend Theorem~\ref{thm-lb-approx-prod-det} to randomized streaming algorithms.

\begin{theorem} 
Let $0 < \epsilon < 1/2$. 
Every randomized streaming algorithm for $\APP_{\epsilon}$ has space complexity 
$\Omega(\log n + \log b - \log \epsilon - \Theta(1))$ when $n$, $b$, and $\epsilon$ satisfy $-\log \epsilon \le b  <  -\log(1-\epsilon) \cdot n/2$.
\end{theorem}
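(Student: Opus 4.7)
The plan is to reduce the greater-than problem $\mathsf{GT}_m$ to $\APP_\epsilon$ and invoke the $\Omega(\log m)$ lower bound from Theorem~\ref{thm:coco}, reusing the stream construction from the proof of Theorem~\ref{thm-lb-approx-prod-det}. Concretely, I would take $\delta = -\log(1-\epsilon)$, $y = \lfloor nb/\delta \rfloor$, and $m = \lfloor y/3 \rfloor + 1$, and invoke Claim~\ref{claim-APP} to select, for every $a \in \{0, 1, \ldots, m-1\}$, an input stream $s_a \in S^{\le 2n}$ whose product $a_a$ lies in the bucket $B_{3a}$.

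Given a randomized streaming algorithm $\mathcal{R} = (\mathcal{A}_{n,b})_{n,b \in \mathbb{N}}$ for $\APP_\epsilon$, the induced randomized one-way protocol for $\mathsf{GT}_m$ would run as follows. On input $a$, Alice simulates $\mathcal{A}_{2n,b}$ on $s_a$ using her private randomness and sends the resulting memory state to Bob, using exactly $\spa_\mathcal{R}(2n, b)$ bits. On input $b$, Bob applies the output function $o_{2n,b}$ to the received state to obtain a value $\tilde a$ and outputs $1$ iff $\tilde a < (1-\epsilon)^{3b+2}$.

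To verify correctness, condition on the event (of probability $\ge 2/3$) that $\tilde a$ is a genuine $(1-\epsilon)$-approximation of $a_a$. If $a > b$, then $a_a \le (1-\epsilon)^{3a} \le (1-\epsilon)^{3b+3}$, so $\tilde a < a_a/(1-\epsilon) \le (1-\epsilon)^{3b+2}$. If $a \le b$, then $a_a > (1-\epsilon)^{3a+1} \ge (1-\epsilon)^{3b+1}$, so $\tilde a > (1-\epsilon) a_a > (1-\epsilon)^{3b+2}$. Thus Bob's output is correct with probability at least $2/3$ on every instance, and the $\Omega(\log m)$ lower bound of Theorem~\ref{thm:coco} yields $\spa_\mathcal{R}(2n,b) = \Omega(\log m) = \Omega(\log n + \log b - \log\delta)$. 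Substituting $n/2$ for $n$ and recalling from the deterministic proof that $\log \delta - \log \epsilon = \Theta(1)$ in the given range of $\epsilon$ then completes the argument.

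I do not expect any serious obstacle, since Claim~\ref{claim-APP} already handles all the arithmetic. The one subtle point is the choice of threshold $(1-\epsilon)^{3b+2}$, which must exploit the $(1-\epsilon)^3$-spacing between the bucket indices $3a$ in order to strictly dominate the $(1-\epsilon)$-approximation slack on each side; this triple-index spacing is exactly what the deterministic proof already built in, so the same streams $s_a$ carry the reduction through.
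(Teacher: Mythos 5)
Your proposal is correct, and it follows the paper's overall strategy (a reduction from $\mathsf{GT}_m$ using the bucket streams supplied by Claim~\ref{claim-APP}), but the protocol you build on Bob's side is genuinely different and in fact cleaner. In the paper, Bob also runs the streaming algorithm on his own stream $s_j$, compares the two approximate outputs, and argues via the number of buckets separating them; this forces two independent (possibly erroneous) executions, hence an error-amplification step (error $1/5$, so that $(5/6)^2\ge 2/3$) and a coarser $5$-spacing of the bucket indices. You instead exploit the fact that Bob knows the \emph{exact} bucket index $3b$ of his reference value, so he never needs to run the algorithm: he only compares Alice's single output $\tilde a$ against the fixed threshold $(1-\epsilon)^{3b+2}$. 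Your case analysis is right: if $a>b$ then $\tilde a < a_a/(1-\epsilon)\le(1-\epsilon)^{3b+2}$, and if $a\le b$ then $\tilde a > (1-\epsilon)a_a > (1-\epsilon)^{3b+2}$, both strictly, so a single $2/3$-correct run suffices and the $3$-spacing of the deterministic proof carries over unchanged. Both arguments give the same asymptotic bound; yours buys a simpler error analysis (no amplification, no squared success probability) at no cost. Two cosmetic points: you overload $b$ as both the bit-size parameter and Bob's $\mathsf{GT}$ input, which you should rename, and you should state explicitly that the streams $s_a$ lie in $\Prob_b^{\le 2n}$ under assumption \eqref{cond-n-eps} so that $\mathcal{A}_{2n,b}$ and its $2/3$ guarantee indeed apply.
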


\begin{proof}
We can show this result by a standard reduction to the one-way communication complexity of $\mathsf{GT}_m$; see Theorem~\ref{thm:coco}.
Let $\mathcal{R}$ be a randomized streaming algorithm for $\APP_{\epsilon}$. As in the proof of Theorem~\ref{thm-lb-approx-prod-det}, we define
$\delta = -\log(1-\epsilon)$. 
It suffices to show that $\spa_{\mathcal{R}}(2n,b) \geq \Omega(\log n + \log b - \log\delta -  \Theta(1))$ 
under the assumption \eqref{cond-n-eps}.
We can assume that the error probability of $\mathcal{R}$ is bounded by $1/6$. Consider the number $y$ and 
the buckets $B_{5i}$ for $0 \le i \le \lfloor y/5 \rfloor$ from the proof of Theorem~\ref{thm-lb-approx-prod-det}. Let $s_i$ be an input stream of length at most $2n$  whose product yields a number $a_i \in B_{5i}$.

We show that $\mathcal{R}$ yields a randomized one-way
protocol for $\mathsf{GT}_{\lfloor y/5 \rfloor+1}$. 
Assume that $i$ is the input for Alice and $j$ is the input for Bob, where 
$0 \le i,j \le \lfloor y/5 \rfloor$.
Alice starts by running the streaming algorithm $\mathcal{R}$ on the input stream $s_i$. Thereby she uses here random bits to simulate the probabilistic
choices of $\mathcal{R}$. Let $\alpha_i$ by the final memory state of $\mathcal{R}$ obtained by Alice. She sends $\alpha_i$ to Bob. 
Bob then runs $\mathcal{R}$ on input $s_j$; let $\alpha_j$ be the final memory state of  $\mathcal{R}$ obtained by Bob.
Bob then applies the output function of $\mathcal{R}$ to the memory states $\alpha_i$ and $\alpha_j$ and obtains rational numbers 
$r_i$ and $r_j$ respectively. With probability at least $(5/6)^2 \geq 2/3$ the output numbers $r_i$ and $r_j$ satisfy
$a_i(1-\epsilon)\leq r_i \le a_i/(1-\epsilon)$ and $a_j(1-\epsilon)\leq r_j \le a_j/(1-\epsilon)$, and thus
$r_i \in B_{5i+1} \cup B_{5i} \cup B_{5i-1}$ and $r_j \in B_{5j+1} \cup B_{5j} \cup B_{5j-1}$. Bob can then distinguish the cases
$i < j$, $i \ge j$, and $i>j$:
\begin{itemize}
\item If $i < j$ then $r_i > r_j$ and between the unique buckets containing $r_j$ and $r_i$ there are at least two buckets, namely
$B_{5i+2}, \ldots, B_{5j-2}$.
\item If $i > j$ then $r_i < r_j$ and between the unique buckets containing $r_i$ and $r_j$ there are at least two buckets, namely
$B_{5j+2}, \ldots, B_{5i-2}$.
\item If $i = j$ then $r_i, r_j \in B_{5i+1} \cup B_{5i} \cup B_{5i-1}$ and there is at most one bucket between the unique  buckets containing $r_i$ and $r_j$.
\end{itemize}
With Theorem~\ref{thm:coco},
this shows that the communicated memory state $\alpha_i$ must have bit length 
$\Omega(\log(y/5)) = \Omega(\log n + \log b - \log\delta -  \Theta(1))$.
\qed\end{proof}

\section{The threshold problem}

In this section we consider the threshold problem $\TPP$. We start with two simple upper bounds:

\begin{theorem} \label{thm-threshold-upper}
There are deterministic streaming algorithms for $\TPP$ 
with space complexity $2 n b$ and $1.443 \cdot 2^b$ (the latter for $b$ large enough), respectively.
\end{theorem}

\begin{proof}
Let $B = 2^b$. 
For notational convenience we consider streams of length $n+1$.
Consider such an input stream $q_0q_1 \cdots q_n \in \mathbb{P}^*$ of $n+1$ rational probabilities 
with $\Vert q_i \Vert \leq b$ for all $0 \le i \le n$. Recall that $q_0$ is the threshold value.
A streaming algorithm for $\TPP$ can simply store all the $q_i$ with at most
$2(n+1) b$ bits in total.

Another way to store the product $\prod_{i=1}^k q_i$ for $1 \le k \le n$
is to store for every prime $p \leq B$ how often it appears in the product, where an occurrence
in the denominator is counted negative. By the prime number theorem, the number of primes $p \leq B$
is asymptotically $B/\ln(B)$ and therefore bounded by $1.4427 \cdot B/b$ for $b$ large enough (note that $1.4427 > \log_2(e)$). The total number of occurrences of a prime $p$ in the product
is  between $-nb$ and $nb$.
This leads to the space bound $1.4427 \cdot B \cdot (\log n + \log b+1)/b$.

%There are $\mathcal{O}(B/\log B) = \mathcal{O}(B/b)$ primes $p \leq B$ and the total number of occurrences of a prime $p$ in the product
%is  between $-nb$ and $nb$.
%This leads to the space bound $\mathcal{O}(B (\log n + \log b)/b)$.
%
%This can be improved to $\mathcal{O}(B)$ as follows:  

This can be improved to $1.443 \cdot B$ as follows:  
We can assume that $q_0 > 0$, otherwise the algorithm can output $0$.
Since $\Vert q_0 \Vert \leq b$ we must have $q_0 \ge 1/B$. 
Moreover, we can ignore all $1$'s in the stream $q_1 q_2 \cdots q_n$. If $q_i < 1$ then $q_i \leq (B-1)/B$.
Assume now that the number of $i$ with $1 \le i \le n$ and $q_i \leq (B-1)/B$ is at least 
$B \cdot \ln B$. Then, the product $\prod_{i=1}^n q_i$ is upper bounded by 
$$
\left(\frac{B-1}{B}\right)^{B \cdot \ln B} < \left(\frac{1}{e}\right)^{\ln B} = \frac{1}{B} \le q_0 .
$$
Hence, the algorithm can safely output $1$.
This means that we can replace in the above space bound $1.4427 \cdot B \cdot (\log n + \log b+1)/b$ the value $\log n$ by $\log(B \cdot \ln B) \leq 
b + \log(b) + \mathcal{O}(1)$ which gives the space bound $1.443 \cdot B$ for $b$ large enough
(this also includes additional space $\mathcal{O}(b)$ for storing $q_0$ and a binary counter up to $B \cdot \ln B$ for the number of $q_i \leq (B-1)/B$).
%This means that we can replace in the above space bound $\mathcal{O}(B (\log n + \log b)/b)$ the value $n$ by $B \ln B \leq \mathcal{O}(B b)$.
%We obtain the space bound $\mathcal{O}(B)$ (this also includes the space for storing $q_0$).
\qed\end{proof}
Let us now come to lower bounds for $\TPP$.

\begin{theorem} \label{thm-lower-bound-threshold}
Fix a constant $\gamma > 0$.
Every deterministic streaming algorithm $\mathcal{D}$ for $\TPP$ satisfies $\spa_{\mathcal{D}}(2n, (2+\gamma) (\log_2 n + b)) \geq 
n  b$ for $n$ and $b$ large enough depending on $\gamma$.
\end{theorem}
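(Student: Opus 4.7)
Plan. A fooling-set argument. I will produce $2^{nb}$ Alice-prefixes $s_x$ (indexed by $x \in \{1,\ldots,2^b\}^n$) of length $n$ over $\Prob_{B'}$ with $B' := (2+\gamma)(\log n + b)$, and show that every pair $x \neq y$ admits an extension $t_{x,y}$ of length at most $n$ in $\Prob_{B'}$ with $\TPP(s_x t_{x,y}) \neq \TPP(s_y t_{x,y})$. Since $|s_x t_{x,y}| \leq 2n$, this forces $|Q_{2n,B'}| \geq 2^{nb}$, giving $\spa_{\mathcal{D}}(2n, B') \geq nb$.

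The encoding attaches to each pair $(i,v) \in \{1,\ldots,n\} \times \{1,\ldots,2^b\}$ a distinct prime $P_{i,v}$ of bit size at most $B'$. By the prime number theorem, the slack $(1+\gamma)(\log n + b)$ between $B'$ and $\log n + b$ makes this possible for $n, b$ large enough: the $P_{1,v}$ are taken in a small range around $D/n$ (with $D := 2^{B'}$) and the $P_{i,v}$ for $i \geq 2$ in a large range just below $D$. Define $p_i(v) := P_{i,v}/D$ and $s_x := p_1(x_1)\,p_2(x_2) \cdots p_n(x_n)$; the first element of $s_x$ plays the role of the $\TPP$ threshold. A direct calculation gives $\TPP(s_x t) = 1 \iff T < \beta_x$, where $T$ is the product of the extension and
\[
\beta_x \;:=\; \frac{p_1(x_1)}{\prod_{i=2}^n p_i(x_i)} \;=\; \frac{P_{1,x_1} \cdot 2^{(n-2) B'}}{\prod_{i=2}^n P_{i,x_i}}.
\]
The chosen prime ranges keep every $\beta_x$ in a bounded subinterval of $(0,1)$. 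In reduced form, the odd prime factors of the numerator and denominator of $\beta_x$ are pairwise distinct primes from our family, so unique factorization makes $x \mapsto \beta_x$ injective on the $2^{nb}$ inputs.

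For the fooling step, given $x \neq y$ with $\beta_x < \beta_y$ (without loss of generality), I exhibit $t_{x,y}$ whose product equals $\beta_x$ exactly: distribute the prime $P_{1,x_1}$ and the chunks of $2^{(n-2)B'}$ across $n$ factors $t_k$, pairing each with one of the denominator primes $P_{i,x_i}$ so that every $t_k$ is at most $1$ and has bit size at most $B'$. Then $\TPP(s_x t_{x,y}) = 0$ while $\TPP(s_y t_{x,y}) = 1$, and the fooling is complete.

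The main technical obstacle I anticipate is verifying this factoring. The naive chunks $2^{B'}/P_{i,x_i}$ slightly exceed $1$ because $P_{i,x_i}$ is just below $2^{B'}$. The fix is to calibrate the prime ranges (consuming part of the $(1+\gamma)(\log n + b)$ slack in $B'$) so that each chunk is of the form $2^{B' - O(\log n)}/P_{i,x_i} \in (0,1]$, and to use the small numerator prime $P_{1,x_1}$ (which has bit size $\ll B'$) together with a few extra bits of room to absorb the residual factor of $2^{O(n\log n)}$ needed to make $\prod t_k = \beta_x$ on the nose. Since the residual grows only as $O(n \log n)$ while $B'$ offers $\Theta(\log n + b)$ bits of room per factor, the budget suffices for $n$ and $b$ large enough; once the factoring is verified, the injectivity of $\beta_x$ finishes the argument.
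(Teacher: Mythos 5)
Your high-level plan (a fooling family of $2^{nb}$ prefixes built from distinct primes, distinguished by a length-$\le n$ extension, giving $nb$ bits) is the same as the paper's, but the concrete encoding does not support the key step, and the proposed fix is not just unproven but points in an impossible direction. In your encoding the extension must have product exactly $\beta_x=P_{1,x_1}2^{(n-2)B'}/\prod_{i\ge 2}P_{i,x_i}$, and since each prime $P_{i,x_i}$ ($i\ge 2$) exceeds $2^{B'}/2$, any length-$\le n$ extension achieving this is forced to have $n-1$ factors whose denominators are exactly these primes plus at most one spare factor $u_n/w_n$. The numerators may then only contain the prime $2$, one copy of $P_{1,x_1}$, and divisors of $w_n$, and each numerator is capped by its denominator. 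A pure power-of-two numerator in a slot with denominator $P\in(2^{B'-1},2^{B'})$ loses a factor $P/2^{B'-1}$, i.e.\ almost one bit per slot; the condition $\beta_x\le 1$ forces the $P_{i,x_i}$ to lie within an average factor $2^{O(B'/n)}$ of $2^{B'}$, so these per-slot losses total $\Theta(n)$ bits, while the total slack $\log(1/\beta_x)$ is at most about $B'-1=O(\log n+b)$ bits and the ``helpers'' ($P_{1,x_1}$ and divisors of $w_n$) can repair only $O(B')$ of it. Hence in the relevant regime $\log n + b = o(n)$ (e.g.\ $b=\Theta(\log n)$) the exact-product extension simply does not exist for your prefixes. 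Your suggested repair is also logically backwards: the ``residual factor of $2^{O(n\log n)}$'' is a \emph{deficit} (the chunked product falls short of $\beta_x$), and since every extension element is a probability $\le 1$, no amount of ``room per factor'' can multiply the product back up; the caps bind multiplicatively, not additively.

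The paper avoids this obstruction by a different choice of prefix alphabet: block $i$ consists of ratios $p_j/p_{j+1}$ of \emph{consecutive} primes from a dedicated range, and the threshold is $t=p_1/p_{nB+n+1}$. Then for a colliding pair of prefixes one appends $(s_1/r_1)\cdots(s_n/r_n)$ with $s_i=p_{(i-1)(B+1)+1}/p_{i(B+1)+1}$; each factor is automatically a probability of bit size $\le(2+\gamma)(\log n+b)$, and the product telescopes so that one stream lands exactly on the threshold (output $0$) while the other falls strictly below it (output $1$). The telescoping is precisely what delivers the exact hit that your power-of-two-denominator encoding cannot provide. To salvage your argument you would either have to switch to such a cancellation-friendly alphabet, or prove that Bob can hit the (possibly exponentially narrow) interval $[\beta_x,\beta_y)$ without hitting $\beta_x$ exactly, for which you give no argument.
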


\begin{proof}
Let $B = 2^b$.
We take the first $nB+n+1$ prime numbers $p_1,p_2,...,p_{nB+n+1}$. It is known that
$p_k \leq k \cdot (\ln k + \ln \ln k)$ for $k \geq 6$ \cite[3.13]{RosserS62}.
In all numbers appearing in the two input streams that we will construct, the numerator and denominator are bounded 
by $p_{nB+n+1}^2$. Therefore, the  bit size of all the numbers involved is bounded by 
\begin{equation} \label{eq-prime bound}
\lceil \log_2 p_{nB+n+1}^2 \rceil = 
\lceil 2 \log_2 p_{nB+n+1} \rceil \leq  (2+\gamma) (\log_2 n + b)
\end{equation}
if $n$ and $b$ are large enough (depending on the constant $\gamma$).
Assume for the following that  \eqref{eq-prime bound} holds.
For every $1 \le i \le n$ we define the set
\[
\begin{matrix}
Q_i=\big \{ \frac{p_j}{p_{j+1}} :  (i-1)(B+1)+1\leq j \leq  i(B+1)-1 \big \}.
\end{matrix}
\]
Thus,  $Q_i$ consists of $B$ fractions of two  consecutive primes. Note that the sets $Q_i$ are pairwise disjoint.
We then consider the set of all words 
 \begin{equation} \label{def-S_n,b}
S_{n,b} = \{ q_1 q_2 \cdots q_n \in \mathbb{P}^n : q_i \in Q_i \text{ for all } 1 \leq i \leq n\} . 
\end{equation}
Clearly, $|S_{n,b}| = B^n$. 
Moreover, for two different words $q_1 q_2 \cdots q_n, r_1 r_2 \cdots r_n \in S_{n,b}$ the products 
$\prod_{i=1}^n q_i$ and $\prod_{i=1}^n r_i$ are also different due to the uniqueness of prime factorizations.
We set the threshold (the first number in the stream) to 
\begin{equation} \label{def-t}
 t =\frac{p_1}{p_{nB+n+1}}.
\end{equation}
Assume now that there is a deterministic streaming algorithm $\mathcal{D}$ for  $\TPP$ such that
$\spa_{\mathcal{D}}(2n, (2+\gamma) (\log_2 n + b)) < n \cdot b = \log_2 |S_{n,b}|$. 
Then there must exist two different words $u = q_1 q_2 \cdots q_n$ and
$v = r_1 r_2 \cdots r_n$ in $S_{n,b}$ such that 
after reading $tu$ and $tv$, $\mathcal{D}$ is in the same memory state. 
W.l.o.g.~we assume  that $\prod_{i=1}^n q_i < \prod_{i=1}^n r_i$.
We continue the words $t u$ and $t v$, respectively, with the length-$n$ word
$w = (s_1/r_1) (s_2/r_2) \cdots (s_n/r_n)$, where 
\begin{equation} \label{def-s_i}
s_i=\frac{p_{(i-1)(B+1)+1}}{p_{i(B+1)+1}} .
\end{equation}
Note that $s_i/r_i \leq 1$ and that
\[
\prod_{i=1}^n  s_{i} = \prod_{i=1}^n \frac{p_{(i-1)(B+1)+1}}{p_{i(B+1)+1}}  = \frac{p_1}{p_{nB+n+1}} = t .
\]
 Finally, we have
\begin{eqnarray}
\prod_{i=1}^n q_i \prod_{i=1}^n  r_i^{-1}\prod_{i=1}^n  s_{i} & < & \prod_{i=1}^n  s_i = t \text{ and } \\
\prod_{i=1}^n r_i \prod_{i=1}^n  r_i^{-1}\prod_i^n s_i & = & \prod_{i=1}^n  s_{i}  = t .
\end{eqnarray}
But this is a contradiction, since $\mathcal{D}$ arrives in the same memory state after reading the streams $tuw$ and $tvw$.
%\todo{Since the length of the constructed words is $2n+1$ one should replace $2n$ by $2n+1$ in the theorem.}
\qed\end{proof}
From the construction in the proof of Theorem~\ref{thm-lower-bound-threshold} we can easily obtain a lower
bound for randomized streaming algorithms for $\TPP$:

\begin{theorem} \label{thm random TPP 1}
Fix a constant $\gamma > 0$.
Every randomized streaming algorithm $\mathcal{R}$ for $\TPP$ satisfies $\spa_{\mathcal{R}}(2n, (2+\gamma) (\log_2 n + b)) \geq 
\Omega(n \cdot b)$.
\end{theorem}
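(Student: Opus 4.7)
The plan is to reduce from the greater-than problem $\mathsf{GT}_{B^n}$, where $B := 2^b$; by Theorem~\ref{thm:coco} its randomized one-way communication complexity is $\Theta(\log B^n) = \Theta(nb)$. The reduction re-uses the prime-based encoding from the proof of Theorem~\ref{thm-lower-bound-threshold}: for $a \in \{1,\dots,B\}$ let $q_j(a) := p_{(j-1)(B+1)+a}/p_{(j-1)(B+1)+a+1}$, and let $\phi(a_1,\dots,a_n) := \prod_{j=1}^n q_j(a_j)$.

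Because the prime slots used at distinct positions $j$ are pairwise disjoint, $\phi$ is injective, so I may fix once and for all a bijection $\psi:\{1,\dots,B^n\} \to \{1,\dots,B\}^n$ such that $\phi \circ \psi$ is strictly increasing; since $\phi$ is a public function, both players know $\psi$ without communication. Given a randomized streaming algorithm $\mathcal{R}$ for $\TPP$ of space $s$, I build a randomized one-way protocol for $\mathsf{GT}_{B^n}$ of cost $s$ as follows. Alice, on input $x$, sets $(a_1,\dots,a_n) := \psi(x)$ and simulates $\mathcal{R}$, using her private random bits, on the prefix $t, q_1(a_1), \dots, q_n(a_n)$ with threshold $t := p_1/p_{n(B+1)+1}$; she sends the resulting memory state to Bob. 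Bob, on input $y$, sets $(a'_1,\dots,a'_n) := \psi(y)$ and continues $\mathcal{R}$ from that state, using his own private random bits, on the suffix $c_1,\dots,c_n$ where $c_j := s_j/q_j(a'_j)$ and $s_j := p_{(j-1)(B+1)+1}/p_{j(B+1)+1}$; he outputs the final answer of $\mathcal{R}$.

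The key identity is that $\prod_{j=1}^n s_j$ telescopes to $t$, so the full stream has product $t \cdot \prod_j q_j(a_j)/q_j(a'_j)$, and therefore $\TPP$ returns $1$ iff $\phi(\psi(x)) < \phi(\psi(y))$, equivalently (by monotonicity of $\phi \circ \psi$) iff $x < y$. Hence the protocol decides $\mathsf{GT}_{B^n}$ with error at most $1/3$, and Theorem~\ref{thm:coco} forces $s \geq \Omega(\log B^n) = \Omega(nb)$, which is the desired lower bound.

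The main technical obstacle is the bookkeeping needed to stay inside the theorem's parameter regime. One must check that each $c_j$ is a probability in $(0,1]$, which follows from $p_{(j-1)(B+1)+1} \leq p_{(j-1)(B+1)+a'_j}$ and $p_{(j-1)(B+1)+a'_j+1} \leq p_{j(B+1)+1}$ (exactly the monotonicity of $(p_k)$ already used in the proof of Theorem~\ref{thm-lower-bound-threshold}); that the bit size of each $c_j$ stays below $(2+\gamma)(\log n + b)$, which uses the same prime-counting bound $p_{n(B+1)+1} \leq \mathcal{O}(nB \log(nB))$ as before; and that the total stream length of $2n+1$ fits the bound $2n$, which is handled by performing the construction with $n-1$ slots instead of $n$ and absorbing the constant into $\Omega(nb)$.
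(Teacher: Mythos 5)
Your proposal is correct and follows essentially the same route as the paper's proof: a reduction from $\mathsf{GT}_{B^n}$ using the prime-fraction encoding, threshold $t$ and telescoping numbers $s_j$ from the deterministic lower bound, with Alice streaming $t$ and her encoded word and Bob streaming $s_j/q_j(a'_j)$ so that $\TPP$'s answer reveals the order of the two products. Your explicit injectivity argument and the $n-1$-slot fix for the stream-length bookkeeping are minor refinements of details the paper leaves implicit.
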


\begin{proof}
We transform a randomized streaming algorithm $\mathcal{R}$ for $\TPP$ 
into a randomized one-way protocol for $\mathsf{GT}_{B^n}$. 

Take the set of words $S_{n,b}$ from  \eqref{def-S_n,b}, the numbers $s_i$ from  \eqref{def-s_i}
and the threshold $t$ from \eqref{def-t}.
Let $f : \{1,2,\ldots,B^n\} \to S_{n,b}$ be the unique bijection such that for all $1 \le i,j \le B^n$ with $f(i) = q_1 q_2 \cdots q_n$
and $f(j) = r_1 r_2 \cdots r_n$ we have:
$i > j$ if and only if $\prod_{k=1}^n q_k < \prod_{k=1}^n r_k$.

On input $i$, Alice computes the word $f(i) = q_1 q_2 \cdots q_n$, reads $t \, q_1 \cdots q_n$ into $\mathcal{R}$ and sends 
the resulting memory state $\alpha$ of $\mathcal{R}$ to Bob. Bob then computes from his input $j$ the word $f(j) = r_1 r_2 \cdots r_n$
and reads, starting from memory state $\alpha$, the word $(s_1/r_1)(s_2/r_2) \cdots (s_n/r_n)$ into the algorithm.
From the final output of $\mathcal{R}$, 
Bob can decide with probability at least $2/3$ whether $i > j$:
\begin{itemize}
\item If $\prod_{k=1}^n q_k <  \prod_{k=1}^n r_k$ (i.e., $i > j$) then $\mathcal{R}$  outputs $1$ with probability at least $2/3$ because
\[
\prod_{k=1}^n q_k \prod_{k=1}^n  \frac{s_k}{r_k} <  \prod_{k=1}^n  s_k = t .
\]
\item If $\prod_{k=1}^n q_k \ge  \prod_{k=1}^n r_k$ (i.e., $i \le j$) then $\mathcal{R}$  outputs $0$ with probability at least $2/3$ because
\[ \prod_{k=1}^n q_k \prod_{k=1}^n  \frac{s_k}{r_k}  \ge  \prod_{k=1}^n  s_k = t .\]
\end{itemize}
It follows from Theorem~\ref{thm:coco} that the communicated memory state $\alpha$ of $\mathcal{R}$ must have bit length $\Omega(\log B^n) = \Omega(n \cdot b)$.
\qed\end{proof}
If we require $b \ge \Omega(\log_2 n)$ then the term $(2+\gamma) (\log_2 n + b)$ in Theorem~\ref{thm random TPP 1} becomes
$\Theta(b)$ and we obtain:

\begin{corollary} \label{thm random TPP 2}
Every randomized streaming algorithm $\mathcal{R}$ for the threshold problem $\TPP$ satisfies 
$\spa_{\mathcal{R}}(n,b) \geq \Omega(n \cdot b)$ whenever $b \ge \Omega(\log_2 n)$.
\end{corollary}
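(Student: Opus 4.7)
The plan is to derive this corollary from Theorem~\ref{thm random TPP 1} by a simple rescaling of the input-slice parameters. Given a randomized streaming algorithm $\mathcal{R} = (\mathcal{A}_{N,B})_{N,B \in \mathbb{N}}$ for $\TPP$ and target parameters $n, b$ with $b \geq c \log_2 n$ for a constant $c$ to be chosen, I would set $n_0 := \lfloor n/2 \rfloor$ and $b_0 := \lfloor b/(2+\gamma) - \log_2 n_0 \rfloor$, where $\gamma$ is the constant supplied by Theorem~\ref{thm random TPP 1}. By construction $2 n_0 \leq n$ and $(2+\gamma)(\log_2 n_0 + b_0) \leq b$, so $\Prob_{(2+\gamma)(\log_2 n_0 + b_0)}^{\leq 2n_0} \subseteq \Prob_b^{\leq n}$.

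The key observation is then that the automaton $\mathcal{A}_{n,b}$, restricted to this smaller input alphabet and to states reachable on inputs of length at most $2n_0$, is a correct randomized streaming algorithm for $\TPP$ on the slice $(2n_0,(2+\gamma)(\log_2 n_0 + b_0))$ with state set of size at most $|Q_{n,b}|$. Combining this restricted automaton with arbitrary correct automata on all remaining slices yields a valid randomized streaming algorithm $\mathcal{R}'$ for $\TPP$, and Theorem~\ref{thm random TPP 1} (applied with inner parameters $n_0, b_0$, which are large whenever $n, b$ are) gives
\[
\spa_{\mathcal{R}}(n,b) \;\geq\; \spa_{\mathcal{R}'}(2n_0,(2+\gamma)(\log_2 n_0 + b_0)) \;\geq\; \Omega(n_0 b_0).
\]

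It then remains to verify $n_0 b_0 = \Omega(nb)$ under the hypothesis $b \geq c \log_2 n$. Clearly $n_0 = \Theta(n)$. For $b_0$, the definition yields $b_0 \geq b/(2+\gamma) - \log_2 n - 1$, and choosing $c > 2+\gamma$ (for instance $c = 2(2+\gamma)$) turns the hypothesis into $\log_2 n \leq b/c$, so $b_0 \geq b \bigl(1/(2+\gamma) - 1/c\bigr) - 1 = \Omega(b)$ for $b$ large enough. Thus $\spa_{\mathcal{R}}(n,b) \geq \Omega(nb)$ as claimed. The whole argument is pure bookkeeping around the non-uniform slices of the streaming model; there is no genuine technical obstacle, and the only thing that needs a little care is choosing the constant $c$ in the hypothesis large enough so that the $\log_2 n$ term absorbed into $b_0$ does not dominate $b/(2+\gamma)$.
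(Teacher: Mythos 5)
Your proposal is correct and is essentially the paper's intended derivation: the corollary is stated as an immediate consequence of Theorem~\ref{thm random TPP 1}, obtained exactly by the parameter substitution you carry out (restricting $\mathcal{A}_{n,b}$ to the smaller slice and choosing $n_0,b_0$ so that $2n_0\le n$, $(2+\gamma)(\log_2 n_0+b_0)\le b$, and $n_0 b_0=\Omega(nb)$ under $b\ge c\log_2 n$ with $c>2+\gamma$). The slice-restriction and constant-choosing bookkeeping you spell out is precisely what the paper leaves implicit.
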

The condition $b \ge \Omega(\log_2 n)$ cannot be completely avoided in Corollary~\ref{thm random TPP 2}, since 
this would contradict the upper bound $1.443 \cdot 2^b$ in Theorem~\ref{thm-threshold-upper}.

Also note that the lower bound for $\APP_\epsilon$ in Theorem~\ref{thm-lb-approx-prod-det} holds for input streams over a set of only 3 numbers
(namely the set $S$ in \eqref{set S}). The lower bound from Corollary~\ref{thm random TPP 2} does not hold for a constant number of probabilities, since we can then store the product with $\mathcal{O}(\log n)$ 
bits.

\section{Sliding window products} \label{sec-SW}

In this section we consider the following sliding window variant 
 $\SWAPP_{m,\epsilon}$ of $\APP_\epsilon$, where $m$ is the size of the sliding window.
\[  
\SWAPP_{m,\epsilon}(q_1 q_2\cdots q_n) = \bigg\{ q \in \mathbb{P} :   (1-\epsilon) \prod_{i=n-m+1}^n q_i < q < \frac{1}{1-\epsilon} \prod_{i=n-m+1}^n q_i \bigg\}
\]
(here, we set $q_i = 1$ for $i \le 0$).  A randomized sliding window algorithm for $\SWAPP_{m,\epsilon}$ is a collection of probabilistic finite automata
$\mathcal{R} = (\mathcal{A}_b)_{b \in \mathbb{N}}$ such that for every $b \in \mathbb{N}$ and every input stream $w \in \mathbb{P}_b^*$ we have:
\[
\mathsf{Prob}[\mathsf{R}_{\mathcal{A}_{b}}^w \in \SWAPP_{m,\epsilon}(w)] \geq \frac{2}{3}.
\]
As usual for sliding window algorithms, we measure the space complexity of a sliding window algorithm for $\SWAPP_{m,\epsilon}$
 in the window size $m$ (instead of
the total stream length $n$), the bit size $b$ and the approximation ratio $\epsilon$.

\begin{theorem} \label{thm SW upper bound}
For $0 < \epsilon < 1/2$, there are
deterministic sliding window algorithms for $\SWAPP_{m,\epsilon}$ with
space complexity $m \cdot (\log m + \log b - \log \epsilon)$ and $2 m b$, respectively.
\end{theorem}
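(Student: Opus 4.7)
The plan is to prove the two bounds via two different representations of the current window contents: a naive storage of all $m$ probabilities in reduced form for the $2mb$ bound, and a bucketing scheme adapted from Theorem~\ref{thm-alur} for the $m(\log m + \log b - \log\epsilon)$ bound. Both are deterministic sliding window algorithms in the sense that the internal state is a queue of the last $m$ values (stored either as rationals or as bucket indices).

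For the trivial $2mb$ bound I would maintain a FIFO queue holding the last $m$ input probabilities $q_{n-m+1}, \ldots, q_n$ as reduced fractions $r_i/s_i$ with $r_i, s_i$ of bit length at most $b$, so $2mb$ bits in total. On a query the algorithm computes the exact product and outputs it, which of course is a valid $(1-\epsilon)$-approximation. For the bucketing bound, set $\epsilon' = \epsilon/m$ and reuse the buckets $B_{a,\epsilon'} = ((1-\epsilon')^{a+1}, (1-\epsilon')^a]$ from the proof of Theorem~\ref{thm-alur}. The state is now a FIFO queue of the $m$ bucket indices $a_{n-m+1}, \ldots, a_n$ of the probabilities in the current window (with a dedicated marker symbol for any $q_i = 0$). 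When a new probability $q$ arrives, its bucket index is computed, appended to the queue, and the oldest index is dropped; the output on query is $(1-\epsilon')^{A}$ where $A = \sum_{i=n-m+1}^n a_i$ (or $0$ if the queue contains a zero-marker).

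Correctness follows exactly as in the proof of Theorem~\ref{thm-alur}: $(1-\epsilon')^{a_i+1} < q_i \le (1-\epsilon')^{a_i}$ gives
\[
(1-\epsilon')^{A+m} \;<\; \prod_{i=n-m+1}^{n} q_i \;\le\; (1-\epsilon')^{A},
\]
and by Bernoulli's inequality $(1-\epsilon/m)^m \geq 1-\epsilon$, so $(1-\epsilon')^A$ lies strictly between $(1-\epsilon)\prod q_i$ and $\prod q_i/(1-\epsilon)$. For the space analysis, a nonzero $q_i \in \Prob_b$ satisfies $q_i \geq 2^{-b}$, hence its bucket index is bounded by $b/(-\log(1-\epsilon')) \leq b/\epsilon' = bm/\epsilon$, so each index fits in $\log m + \log b - \log\epsilon + \mathcal{O}(1)$ bits. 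The queue of $m$ indices therefore uses $m \cdot (\log m + \log b - \log\epsilon)$ bits up to lower order terms, matching the claim.

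I expect no real obstacle here; the one mildly delicate point is making sure the inequalities in the approximation are strict on both sides (the definition of $\SWAPP_{m,\epsilon}$ uses strict inequalities while the buckets are half-open), which is handled by slightly shrinking $\epsilon'$ by a constant factor if needed, and that the zero-marker for $q_i=0$ only adds an additive $\mathcal{O}(m)$ term that is absorbed into the stated bound.
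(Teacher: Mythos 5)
Your proposal is correct and follows essentially the same route as the paper: the trivial $2mb$ bound by storing the window contents, and for the second bound storing the queue of per-element bucket indices for $\epsilon' = \epsilon/m$ (each bounded by $mb/\epsilon$) and outputting $(1-\epsilon')^A$, exactly as in the paper's adaptation of the algorithm from Theorem~\ref{thm-alur}. Your extra details (Bernoulli's inequality for correctness, the zero-marker, strictness) are fine refinements of the same argument.
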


\begin{proof}
The upper bound $2 m b$ is clear, since the whole window content can be stored with $2mb$ bits.
 For the upper bound $m \cdot (\log m + \log b - \log \epsilon)$
we use the algorithm for $\APP_\epsilon$ from \cite{AlurCJK20}. Recall the sketch after Theorem~\ref{thm-alur}.
Our sliding window algorithm stores for a window content $q_1 q_2 \cdots q_m$ the sequence of bucket indices
$a_1a_2 \cdots a_m$, where $q_i \in B_{a_i, \epsilon'}$ for $\epsilon' = \epsilon/m$. By the analysis in \cite{AlurCJK20}, $(1-\epsilon')^a$ (with
$a = \sum_{i=1}^m a_i)$ is a $(1-\epsilon)$-approximation of $\prod_{i=1}^m q_i$. Moreover, every $a_i$ is bounded by 
$mb/\epsilon$ and therefore can be stored with $\log m + \log b - \log \epsilon$ bits.
\qed\end{proof}

\begin{theorem} \label{thm SW lower bound}
Let $0 < \epsilon < 1/2$. 
Every randomized sliding window algorithm for $\SWAPP_{m,\epsilon}$
requires space $\Omega(m \cdot (\log b - \log \epsilon - \Theta(1)))$.
\end{theorem}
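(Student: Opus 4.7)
The plan is to reduce from the one-way communication problem $\mathsf{IndexGT}_{M,m}$, where $M = \lfloor b/(C\delta)\rfloor$, $\delta = -\log(1-\epsilon)$, and $C$ is a sufficiently large constant (it will turn out that $C = 6$ suffices). By Theorem~\ref{thm:coco}, this problem has one-way randomized communication complexity $\Omega(m \log M)$, and since $\log M = \log b - \log\delta - \Theta(1)$ and $\log\delta = \log\epsilon + \Theta(1)$ for fixed $\epsilon \in (0,1/2)$ (as observed at the start of the proof of Theorem~\ref{thm-lb-approx-prod-det}), a faithful reduction will deliver exactly the bound asserted in the theorem.

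The first step is to fix, for each $k \in \{1,\ldots,M\}$, a rational probability $\pi_k \in \Prob_b$ lying in the bucket $B_{Ck} = ((1-\epsilon)^{Ck+1},(1-\epsilon)^{Ck}]$. For $k \leq M$ this bucket lies inside $[2^{-b},1]$, so such a $\pi_k$ exists either by direct rational approximation or by a short product construction along the lines of Claim~\ref{claim-APP}. Next, I would amplify the given randomized sliding window algorithm $\mathcal{R}$ for $\SWAPP_{m,\epsilon}$ by running $\mathcal{O}(1)$ parallel copies and taking a median, reducing the per-query error below $1/6$ at the cost of only a constant factor in space.

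The reduction itself proceeds as follows. Given Alice's input $(a_1,\ldots,a_m) \in [M]^m$, Alice simulates the amplified algorithm on the stream $\pi_{a_1}\pi_{a_2}\cdots\pi_{a_m}$ and sends the resulting memory state to Bob. On input $(i,v)$, Bob continues the simulation from that state, first feeding $i-1$ copies of the probability $1$ and recording the output $y_1$, then feeding one additional $1$ and recording $y_2$. After the first continuation the sliding window equals $(\pi_{a_i},\pi_{a_{i+1}},\ldots,\pi_{a_m},1,\ldots,1)$ with product $\prod_{j=i}^m \pi_{a_j}$, and after the second continuation it equals $(\pi_{a_{i+1}},\ldots,\pi_{a_m},1,\ldots,1)$ with product $\prod_{j=i+1}^m \pi_{a_j}$, so $y_1/y_2$ is an estimator of $\pi_{a_i}$. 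By a union bound over the two queries, with probability at least $2/3$ both $y_1$ and $y_2$ are $(1-\epsilon)$-approximations and hence $y_1/y_2$ lies within a factor $(1-\epsilon)^{\pm 2}$ of $\pi_{a_i}$. With $C \geq 6$ the buckets $B_{Ck}$ are separated in the log-scale by more than this combined slack, so $y_1/y_2$ uniquely identifies $a_i$ and Bob decides whether $a_i < v$.

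Theorem~\ref{thm:coco} then forces Alice's message -- i.e., the memory state of the amplified algorithm -- to have bit length $\Omega(m \log M) = \Omega(m(\log b - \log\epsilon - \Theta(1)))$, and because amplification blows up the space only by a constant factor the same lower bound transfers back to $\mathcal{R}$. The step I expect to be the main obstacle is the joint calibration of the spacing constant $C$ and the parameter $M$: the ratio $y_1/y_2$ accumulates approximation error from both queries, so the buckets $B_{Ck}$ must be spaced widely enough to survive this combined slack, while at the same time for every $k \leq M = \Theta(b/\delta)$ a rational probability of bit size at most $b$ must genuinely be realizable inside $B_{Ck}$ (this is what pins down the constant hidden in $\Theta(b/\delta)$).
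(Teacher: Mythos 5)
Your reduction is correct and is essentially the paper's own proof: the paper likewise reduces from $\mathsf{IndexGT}$, with Alice streaming $m$ hard values of bit size at most $b$, Bob padding with $1$'s and using the ratio of two approximate window-product queries from the communicated state (there $P'_1/P'_2$, estimating $a_i/a$) to decide the comparison, then invoking Theorem~\ref{thm:coco}. The only real difference is the choice of hard values: the paper takes the exact powers $2^{-i\alpha}$ with $\alpha=\lceil 4\delta\rceil$ and $0\le i\le \lfloor b/\alpha\rfloor-1$, which gives separation at least $(1-\epsilon)^{-4}$ between distinct values and sidesteps the bucket-realizability calibration you flag as the main obstacle (your version is also fixable, e.g.\ by taking dyadic rationals with denominator $2^{b}$ and shrinking $M$ by a constant factor, which changes $\log M$ only by $\Theta(1)$).
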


\begin{proof}
Let $\delta := -\log(1-\epsilon) \in (0,1)$. Since $\log \epsilon - \Theta(1) = \log \delta - \Theta(1)$ (see the beginning 
of the proof of Theorem~\ref{thm-lb-approx-prod-det}),
it suffices to show the space lower bound 
$\Omega(m \cdot (\log b - \log \delta - \Theta(1)))$.

Define the following integers:
\begin{equation}
\alpha = \lceil 4 \delta \rceil, \qquad c = \left\lfloor\frac{b}{\alpha}\right\rfloor .
\end{equation}
Consider the set $A = \{ 2^{-i \cdot \alpha} : 0 \le i \le c-1 \} \subseteq \mathbb{P}$ of size $c$.  
Also note that $\Vert q \Vert \leq c \cdot \alpha \le b$ for all $q \in A$. We prove the theorem by a reduction from the communication
problem $\mathsf{IGT}_{c,m}$. 

Let $\mathcal{R}$ be a randomized sliding window algorithm for $\SWAPP_{m,\epsilon}$.
We can assume that the error probability of $\mathcal{R}$ is bounded by $1/6$.
We obtain a randomized one-way communication protocol for $\mathsf{IGT}_{c,m}$ as follows: 
The input of Alice is a sequence $a_1, a_2, \ldots, a_m$ with $a_i \in A$ and the input of Bob is an index $1 \le i \le m$ and a number
$a \in A$. Bob wants to find out whether $a_i > a$. Alice sends the memory state of the randomized sliding window algorithm after reading
$a_1a_2\cdots a_m$ to Bob. This allows Bob to compute with high probability $(1-\epsilon)$-approximations $P'_1$ and $P'_2$ of the products
\begin{equation*}
P_1 = a_i a_{i+1} \cdots a_m \quad \text{ and } \quad P_2 = a_{i+1} \cdots a_m a,
\end{equation*} 
respectively.
For this, Bob continues the sliding window algorithm from the memory state obtained from Alice with the input numbers $1,1, \ldots, 1$ 
and $1,1, \ldots, 1, a$, respectively, where the number of $1$'s is $i-1$. 
With probability at least $1 - 1/3$ the computed approximations $P'_1$ and $P'_2$ satisfy 
\begin{equation} \label{eq P'_1, P'_2}
(1-\epsilon) P_1 < P'_1 < P_1/(1-\epsilon)   \quad \text{ and } \quad  (1-\epsilon) P_2 < P'_2 < P_2/(1-\epsilon) .
\end{equation}
Assume for the following that \eqref{eq P'_1, P'_2} holds.
We claim that from the quotient $P'_1/P'_2$, Bob can determine whether $a_i > a$ holds. Note that $P_1/P_2 = a_i/a$.
Moreover, \eqref{eq P'_1, P'_2} implies
\begin{equation*}
 (1-\epsilon)^2 \cdot \frac{a_i}{a} \ < \ \frac{P'_1}{P'_2} \ < \ \frac{1}{(1-\epsilon)^2} \cdot  \frac{a_i}{a} .
\end{equation*}
If $a_i \le a$ then $P'_1/P'_2 < 1/(1-\epsilon)^2$. On the other hand, if $a_i > a$ then by the choice of the set $A$ we must have
$a_i/a \ge 2^{\alpha} \ge 2^{4 \delta} = 1/(1-\epsilon)^4$, which yields $P'_1/P'_2 > 1/(1-\epsilon)^2$. Hence, from the quotient $P'_1/P'_2$
Bob can indeed distinguish the cases $a_i > a$ and $a_i \le a$.
By Theorem~\ref{thm:coco}, $\mathcal{R}$ must use
space $\Omega(m \cdot \log c) = \Omega(m \cdot (\log b - \log \delta - \Theta(1)))$.
\qed\end{proof}

\section{Open problems}

For the space complexity of $\SWAPP_{m,\epsilon}$, there is a gap 
between the upper bounds in Theorem~\ref{thm SW upper bound} and the lower bound from Theorem~\ref{thm SW lower bound}
that we would like to close. 
For $\APP_{\epsilon}$, a small gap (by a multiplicative factor of 2) arises from the upper bound 
in Theorem~\ref{thm-alur} and the lower bound in Theorem~\ref{thm-lb-approx-prod-det}.
One may also ask, whether the condition $-\log \epsilon \le b  <  -\log(1-\epsilon) \cdot n/2$ in Theorem~\ref{thm-lb-approx-prod-det} can be relaxed.

%\bibliography{refs}

 \end{document}